
\documentclass[letterpaper, 10 pt, conference]{ieeeconf}  

\IEEEoverridecommandlockouts                              
\overrideIEEEmargins

\usepackage{amsfonts}
\usepackage{amsmath}
\usepackage{mathtools}
\usepackage{subcaption}

\usepackage{color,soul}

\newtheorem{Theorem}{Theorem}
\newtheorem{Proposition}{Proposition}

\newtheorem{Lemma}{Lemma}


\title{\LARGE \bf
Maximizing Multivariate Information with Error-Correcting Codes
}


\author{Kyle Reing$^{*}$, Greg Ver Steeg, and Aram Galstyan
\thanks{*Correspondence: {\tt\small reing@usc.edu}}
\thanks{University of Southern California, Information Sciences Institute,
{\tt\small reing@usc.edu, gregv@isi.edu, galstyan@isi.edu}}}%

\begin{document}

\maketitle
\thispagestyle{empty}
\pagestyle{empty}

\begin{abstract}

Multivariate mutual information provides a conceptual framework for characterizing higher-order interactions in complex systems. Two well-known measures of multivariate information---total correlation and dual total correlation---admit a spectrum of measures with varying sensitivity to intermediate orders of dependence. Unfortunately, these intermediate measures have not received much attention due to their opaque representation of information. Here we draw on results from matroid theory to show that these measures are closely related to error-correcting codes. This connection allows us to derive the class of global maximizers for each measure, which coincide with maximum distance separable codes of order $k$. In addition to deepening the understanding of these measures and multivariate information more generally, we use these results to show that previously proposed bounds on information geometric quantities are tight at the extremes.


\end{abstract}

\section{INTRODUCTION}
\label{sec1}

Characterizing the interactions in a system by its pairwise correlations is a common modeling assumption, but a more complete picture would also consider many-to-one, and many-to-many interactions. In systems with emergent properties, such as those studied in ecology \cite{eco_syn}, systems biology \cite{sysbio_syn}, neuroscience \cite{neuro_syn}, and genetics \cite{gene_syn} (to name a few), these higher-order interactions play a vital role in macroscopic behavior. The issue of defining a proper measure of higher-order correlation, at least from an information theoretic perspective, is what the study of multivariate mutual information (MMI) seeks to do. Tools based on MMI and the related subject of information decomposition have already seen diverse application, offering insights into problems like structure discovery in biological/artificial neural networks  \cite{pid_nn1,pid_nn2,pid_nn4,pid_nn5,pid_nn6,pid_nn3,pid_conv}, and in determining how genes cooperate to produce a particular phenotypic effect \cite{pid_gene1,pid_gene2}. Two major challenges faced by any work attempting to utilize MMI are the selection of an appropriate measure, and its computational cost. The first concern is discussed in detail by Timme et al. \cite{exp_persp}, responding to the fact that no single measure is universally accepted by the community. Choosing from the multitude of measures, each with their own idiosyncratic definition of group information, often boils down to preference, application, or commitment to certain axiomatic principles. To add to this long list of considerations, the cost of analytically or approximately calculating the measure may be prohibitive in all but the smallest applications. The most expressive approaches (such as those based on the partial information decomposition framework \cite{pid}) are usually the most intractable, limiting their practical use to the simplest forms of multivariate dependence \cite{bivar2}. All this serves to reiterate that MMI is far from solved, and could benefit greatly from continued study, especially regarding the aforementioned computational problems. 

Our contribution to this ongoing dialogue is to identify the global maximizers for a class of MMI measures related to Watanabe's total correlation. Finding the extrema of a function is usually a matter of optimization, but for non-convex and/or high-dimensional functions, there may not be a guaranteed strategy for convergence to the optimal solution. Knowledge of a closed form generative procedure for the extrema of MMI is desirable for at least two reasons: 1) these properties inform us about the types of dependence preferentially treated by the measure, which is important when comparing different approaches; 2) additionally, the scalar value of MMI for any distribution becomes more meaningful when it can be ranked according to its relative distance to the min and max. 

The main theoretical results of this work connect the global maximizers for each of the $n-1$ measures ($n$ being the number of random variables) with the class of maximum distance separable (MDS) codes \cite{MW-S} (equiv. ideal secret sharing schemes \cite{Shamir}). Through this connection, it follows that a complete description of the set of global maximizers is still unknown, as this knowledge could be used to solve an open conjecture in coding theory and finite projective geometry \cite{projgeom} known as the MDS conjecture. The proof of this result draws on a few disciplines, including matroid theory and coding theory. Since we expect readers will come from varying backgrounds, we have tried to make the writing expository and the material self-contained. Section \ref{sec2} introduces the family of measures, summarizes the main contributions in non-technical terms, and runs through an example of why an uninformed search procedure is unlikely to discover the maximizers. The preliminaries of Section \ref{sec3} are split into two nearly disjoint subsections, with each presenting only the concepts necessary for both understanding and completion of the final proof. The proofs of the main claim appear in Section \ref{sec4}, building on the definitions, theorems, and intuitions of Sections \ref{sec2} and \ref{sec3}. Section \ref{sec5} discusses important related work, focusing on a family of information geometric measures shown to be upper-bounded by the measures studied here. Previous results on these bounds are strengthened in light of the connection to MDS codes. Finally, Section \ref{sec6} concludes with a discussion of where this work fits with regards to general trends in MMI research, highlighting the appearance of coding theory and cryptography in many recent papers. 

\section{COHESION MEASURES}
\label{sec2}
\subsection{Total Correlation and Dual Total Correlation}
\label{sec2:a}


Historically, the study of multivariate mutual information began with the introduction of two measures, one being Watanabe's \textit{\textbf{total correlation}} (TC) \cite{tc} (equiv. multiinformation \cite{multi}), defined as 
\begin{equation*}
TC(X) = \sum\limits_{i=1}^n H(X_i) - H(X).\\
\end{equation*}
Here the vector random variable $X = \{X_1, \ldots ,X_n \}$ will be taken over a finite support $||X||$, and entropy is defined in the standard way as $\sum\limits_{x\in ||X||}p(x) \log[ 1 / p(x)]$.  We adopt the convention that each marginal $X_i$ has the same support $q$ without loss of generality, implying $||X|| \leq q^n$. Additionally, we assume that our logarithms are taken with respect to base $q$.\\

A related, and equally important measure for our discussion is the \textit{\textbf{dual total correlation}} (DTC) (equiv. excess entropy \cite{excess}, binding information \cite{binding}), originating from Han's \cite{dtc} work on lattice theoretic duality of information measures. Dual total correlation can be defined as 
\begin{align*}
DTC(X) &= H(X) - \sum\limits_{i=1}^n H(X_i|X_{E/i})\\
&= \sum\limits_{i=1}^n H(X_{E/i}) - (n-1)H(X),
\end{align*}
where $X_{E/i}$ denotes the marginal random variable of cardinality $(n-1)$ that excludes $X_i$. Both total correlation and its dual share a number of desirable properties \cite{dtc}\cite{noneghan}, some of which are detailed below. 
\begin{itemize}
\item \textbf{\textit{Entropic:}} A measure is considered \textit{entropic} if it is defined as a function of subset entropies $H(X_A)$ for any $X_A \subseteq X$. Both total correlation and dual total correlation are \textit{linearly entropic} since they are given by linear functions $\langle \mathbf{C}, \mathbf{H} \rangle$, where $\mathbf{H}$ is a column vector of all subset entropies, $\mathbf{C}$ a row vector of $2^n$ constants, and $\langle \cdot,\cdot\rangle$ the inner product.
\item \textbf{\textit{Correlative:}} A measure is \textit{correlative} if it equals zero when the marginals $X_i$,\: $\forall i \in [n]$ are mutually independent.
\item \textbf{\textit{Symmetric:}} A measure is \textit{symmetric} if it remains unchanged under permutation of the marginals.
\item \textbf{\textit{Non-negative:}} The measure is always $\geq 0$.
\end{itemize}
\hfill

One property not shared between the two measures is that total correlation can be represented as the KL-Divergence $D_{KL}\big(p(x)||\prod\limits_{i=1}^n p(x_i)\big)$, whereas dual total correlation has no such divergence expression. Another difference is the set of maximizing distributions for each measure. Previous work has characterized these distributions \cite{binding}\cite{nihatmulti}, shown in Table \ref{tbl1:sub1}/\ref{tbl1:sub2} below for three variables with binary support. Maximizers of TC are exactly the distributions for which a bijection over support $q$ exists between every pair of marginal variables. Such a relationship is often referred to as \textit{informational redundancy}, making the distribution in Table \ref{tbl1:sub1} maximally redundant\footnote{We don't make the distinction here between pairwise and multivariate redundancy. Such a distinction is important in information decomposition, where shared and unique information \cite{uniq} are distinguished with respect to a target variable}. The counterpart to redundancy is \textit{informational synergy}, which refers to the presence of relationships that only exist from a collection $X_A$ ($1<|A|< n$) to subset $X_B \subseteq X_{E/A}$ of the complement set. The parity distribution in Table \ref{tbl1:sub2} only has dependence when all $n$ variables are considered, mapping every $(n-1)$ variable subset ($X_{E/i} \: ,\forall i \in [n])$ to the remaining variable $X_i$. Therefore, we can say that Table \ref{tbl1:sub2} is maximally synergistic of order $(n-1)$. 

\begin{table}[!htb]
    \caption{Maximizing Distributions for Three Variables}
	\begin{subtable}{.5\linewidth}
    	\centering
		\captionof{table}{Binary Maximizers TC}
		\begin{tabular}{ccc|c|}
			\boldmath$X_0$ & \boldmath$X_1$ & \boldmath$X_2$ &\boldmath$Pr$\\
			\hline
            0 & 0 & 0  &1/2 \\
            1 & 1 & 1  &1/2 \\
		\end{tabular}\label{tab:3}
        \label{tbl1:sub1}
	\end{subtable}%
	\begin{subtable}{.5\linewidth}
    	\centering
        \captionof{table}{Binary Maximizers DTC}
		\begin{tabular}{ccc|c|}
        	\boldmath$X_0$ & \boldmath$X_1$ & \boldmath$X_2$ &\boldmath$Pr$\\
			\hline
            0 & 0 & 0 &  1/4 \\
            0 & 1 & 1 &  1/4 \\
            1 & 0 & 1 &  1/4 \\
            1 & 1 & 0 &  1/4 \\
		\end{tabular}\label{tab:4}
        \label{tbl1:sub2}
	\end{subtable}
\end{table}

\subsection{Fujishige's Total Correlation(s) and Cohesion}
\label{sec2:b}
If TC is maximized by information contained in every pair of variables, and DTC by information present only in the $n$ variable joint state, what about information in intermediate collections of variables? Do any measures exist whose maximizers prefer dependence among subsets with cardinality between $2$ and $n$? A promising place to start in answering this question is a family of measures introduced by Fujishige \cite{polymatroid}. The main content of his paper focuses on connecting entropy and polymatroids (introduced in Section \ref{sec3:a}), but a small portion is dedicated to applying these insights to extend TC and DTC. His observation was that summing over all marginals of cardinality $1$ causes each member of the set to be covered $1$ time (or respectively, sums of cardinality $(n-1)$ cover each element $(n-1)$ times for DTC). These cover relations are reflected in the constant multiples of joint entropy appearing in both measures. From this, one can generalize to correlation measures which sum over marginals of cardinality $k$ and cover each element ${n-1 \choose k-1}$ times. These generalized measures, and their duals, are given by

\begin{equation*}
\mathcal{C}^{(k)}(X) = \sum\limits_{X_A \in \mathcal{E}_k} H(X_A) - {n-1 \choose k-1}H(X)
\end{equation*}
\begin{equation*}
\mathcal{C}^{(n-k)}(X) = \sum\limits_{X_B \in \mathcal{E}_{n-k}} H(X_B) - {n-1 \choose n-k-1}H(X).
\end{equation*}
Here, $k$ (and $(n-k)$ respectively) is called the \textit{interaction order}, which dictates the cardinality of subsets appearing in the first term
\begin{equation*}
\mathcal{E}_k = \{A : A \subseteq X, |A| = k \}.
\end{equation*}

For $n$ variables, $\mathcal{C}^{(1)}$ is clearly equivalent to the total correlation, and $\mathcal{C}^{(n-1)}$ to the dual total correlation. For those more comfortable with the conditional entropy definition of dual total correlation, each of the measures can be written as
\begin{equation*}
\begin{split}
\mathcal{C}^{(k)}(X) &= \sum\limits_{X_A \in \mathcal{E}_k} H(X_A) - {n-1 \choose k-1}H(X)\\
&= \sum\limits_{X_A \in \mathcal{E}_k} H(X_A) + \Big( {n-1 \choose k} - {n \choose k}\Big) H(X)\\
&={n-1 \choose k} H(X) - \sum\limits_{X_B \in \mathcal{E}_{n-k}} H(X_B|X_A)\\
\end{split}
\end{equation*}

The desirable properties satisfied by $\mathcal{C}^{(1)}$ and $\mathcal{C}^{(n-1)}$, including being \textit{linearly entropic}, continue to be satisfied for any choice of parameters $n$ and $k$. Because of this, the measures can be related according to the following linear inequalities \cite{polymatroid}, which we call \textit{polymatroid bounds}. 
\begin{equation}
\begin{split}
(n-k){\cdot}\mathcal{C}^{(k)}(X) \:\geq\: k {\cdot} \mathcal{C}^{(k+1)}(X)\\
(n-k){\cdot}\mathcal{C}^{(n-k)}(X) \:\geq \: k {\cdot}\mathcal{C}^{(n-k-1)}(X)
\end{split}
\label{eq1}
\end{equation}

\begin{figure*}
  \includegraphics[width=\textwidth]{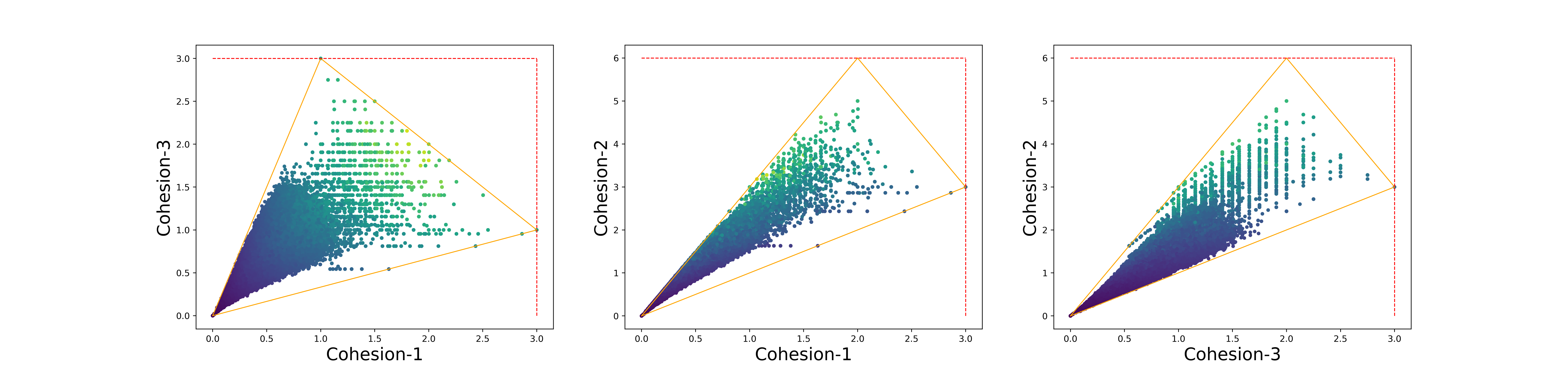}
  \caption{Plots of each Cohesion measure for four binary variables, alongside the polymatriod bounds (orange lines). The color gradient represents the value of the third measure, with blue for low values, and yellow for high. The dashed red lines correspond to the constant upper bounds, introduced in Section \ref{sec4}}
  \label{fig1}
\end{figure*}

Fujishige referred to each measure as a (dual) total correlation, but we believe such overlapping nomenclature might be confusing. To maintain some link to total correlation without trying to re-brand existing terminology, we refer to the family of measures as \textit{\textbf{Cohesion measures}}, after one of Watanabe's (and subsequently, Han's) original names for the total correlation \cite{dtc,cohesion}. For interaction order $k$, the associated measure is referred to as \textit{\textbf{Cohesion-k}}, or adopting the shorthand of \cite{nihat_geom}, $\mathcal{C}^{(k)}$.

\subsection{Summary of main results}
\label{sec2:c}
The maximizers for \textit{Cohesion-k} have only been studied for the special cases of $k = 1$ and $(n-1)$. This leaves the question open regarding each Cohesion measures' sensitivity to certain orders of dependence. We will show through the main result of the paper (Theorem 3) that each intermediate measure can be upper-bounded by a constant, and that this bound is always achieved for a particular class of distributions with marginal support $q$. The exact value of $q$, if known in general, would provide an answer to the MDS conjecture (discussed further in Section \ref{sec3}). Intuitively, the MDS conjecture is a claim about how large $q$ must be ($q \geq n-1$) in order for certain types of linear dependence structures to occur. Despite the uncertainty surrounding $q$, it is still possible to prove that such an integer must exist for arbitrary values of $n$ and $1\leq k \leq n-1$. The proof relies on a subfield of combinatorics, called matroid theory, to exploit the dependence structure of distributions that achieve the bound with equality. Matroids are objects that generalize the concepts of linear independence and provide a common language for talking about dependence, both in codes and probability distributions.\\

While matroids are required to make the most general claims, for special cases, such as when the number of variables is a prime power and $q=n$, straightforward methods for constructing MDS codes are well established. Because these special cases provide concrete examples of distributions that achieve the bound, the details of their construction is worth introducing. Thus, the preliminaries of Section \ref{sec3} attempts to balance high level abstractions required for the proof (Section \ref{sec3:a}) with motivated examples from coding theory (Section \ref{sec3:b}). Before jumping straight into these preliminaries, we wish to solidify intuition for the problem, and further demonstrate why the theoretical approach is necessary. To do this, we look at the simplest Cohesion measure whose maximizers have not been characterized, which is \textit{Cohesion-2} over four random variables.

\subsection{Cohesion-2 for 4 Variables}
\label{sec2:d}

A naive but immediately accessible way of exploring the maximizing distributions, given the material presented so far, is empirical evaluation. This might take the form of a simplex search using gradient descent, or even a brute force evaluation of every point on a discretized (alt. uniformly sampled) simplex. Due to the low dimensionality of this special case, we have the luxury of performing the latter without much computational concern. Assuming for now that each marginal variable is binary, and given a sample from the $16$-simplex, we can calculate all three Cohesion measures for four variables and use the results as coordinates to a three-dimensional vector space.\\

Through the polymatroid upper and lower bounds of Equation \ref{eq1}, we can define a convex polytope over feasible solutions. Normally the region is unbounded, but constraints on the support of each variable lead to three additional inequalities
\begin{equation*}
\mathcal{C}^{(1)} + \mathcal{C}^{(3)} \leq 4, \quad \mathcal{C}^{(2)} + 3 {\cdot} \mathcal{C}^{(1)} \leq 12, \quad \mathcal{C}^{(2)} + 3 {\cdot}\mathcal{C}^{(3)} \leq 12.
\end{equation*}

The proofs for these inequalities appear in Appendix A. Figure \ref{fig1} shows the two-dimensional projections of this space, alongside the bounds. Given these results, we are interested in two questions: what are the empirically maximal distributions, and do they achieve the upper bound? As a sanity check, we see \textit{Cohesion-1} (TC) and \textit{Cohesion-3} (DTC) achieve their maximum of $3$ bits for the four variable versions of distributions in Table 1, which meet the bounds with equality. \textit{Cohesion-2}, on the other hand, peaks for the distribution in Table \ref{tbl2}, which has been called a redundant-synergy distribution \cite{griffith2}, owing to the combination of third-order (first three variables) and second-order (first and last variable) information. This answers the first question, but when comparing the value of this distribution ($5$ bits) with the upper bound ($6$ bits), we observe a large gap in the feasible region. The appearance of this gap unfortunately means that there is no verification this distribution is actually maximal. Perhaps the sampling over the simplex was too course, or the bound is too loose and additional (possibly nonlinear) inequalities are required.
\begin{table}[!htb]
    	\centering
        \caption{Binary Maximizers for Cohesion-2}
		\begin{tabular}{cccc|c|}
        	\boldmath$X_0$ & \boldmath$X_1$ & \boldmath$X_2$ & \boldmath$X_3$&\boldmath$Pr$\\
			\hline
            0 & 0 & 0 & 0 & 1/4\\
            0 & 1 & 1 & 0 & 1/4\\
            1 & 0 & 1 & 1 & 1/4\\
            1 & 1 & 0 & 1 & 1/4\\
		\end{tabular}
        \label{tbl2}
\end{table}
Suppose we are fortunate in not having to check these alternatives due to our access to an oracle $\mathcal{O}(\mathcal{C},n)$. This oracle takes as input our Cohesion measure $\mathcal{C}$, some integer number of variables $n$, and returns a distribution of cardinality $n$ in the set of global maximizers for $\mathcal{C}$. On entering \textit{Cohesion-2} and $n=4$, the following distribution is revealed (Table \ref{tbl3}).\\

\begin{table}[!htb]
    	\centering
        \caption{Reed-Solomon Maximizers for Cohesion-2}
		\begin{tabular}{cccc|c|cccc|c|}
        	\boldmath$X_0$ & \boldmath$X_1$ & \boldmath$X_2$ & \boldmath$X_3$&\boldmath$Pr$&\boldmath$X_0$ & \boldmath$X_1$ & \boldmath$X_2$ & \boldmath$X_3$&\boldmath$Pr$\\
			\hline
            0 & 0 & 0 & 0 & 1/16 &2 & 2 & 2 & 2 & 1/16\\
            0 & 1 & 2 & 3 & 1/16 &2 & 3 & 0 & 1 & 1/16\\
            0 & 2 & 3 & 1 & 1/16 &2 & 0 & 1 & 3 & 1/16\\
            0 & 3 & 1 & 2 & 1/16 &2 & 1 & 3 & 0 & 1/16\\
            1 & 1 & 1 & 1 & 1/16 &3 & 3 & 3 & 3 & 1/16\\
            1 & 0 & 3 & 2 & 1/16 &3 & 2 & 1 & 0 & 1/16\\
            1 & 3 & 2 & 0 & 1/16 &3 & 1 & 0 & 2 & 1/16\\
            1 & 2 & 0 & 3 & 1/16 &3 & 0 & 2 & 1 & 1/16\\
		\end{tabular}
        \label{tbl3}
\end{table}
Calculating \textit{Cohesion-2} for this distribution yields a value of $6$ quaternary digits ($12$ bits), meeting the upper bound with equality. Our oracle has verified that the polymatroid bounds are indeed enough to fully characterize Cohesion, at least in this special case. However, we've only replaced one mystery with another: how does the oracle work, and does it generalize to arbitrary Cohesion measures? It turns out that each row in Table \ref{tbl3} is related to an important concept in algebraic coding theory called a Reed-Solomon code. Reed-Solomon codes have the property of being maximum distance separable (MDS), and provide a way of constructing these codes when the support is large (\:$q \geq n$\:) and $n$ is a prime power. Notice that our restriction to binary support in Figure \ref{fig1} played an important role in the existence of a gap. The dangers of empirical evaluation should now be apparent if they weren't before. Even if the criteria on the number of variables and size of the support are met, a search over the $n^n$ simplex would be required to find a Reed-Solomon distribution. As we will later show, if the MDS conjecture holds true, for any value of $n$ and $k$, $q$ is expected to be at least $n-1$, meaning superexponential search complexity is likely unavoidable. Thankfully, since we know which vertex of the Cohesion polytope a maximizer would appear if one existed, we can abandon search in favor of investigating the structural properties at this point. The next section is dedicated to preliminaries on matroid and coding theory, including illustrative examples whenever possible. 

\section{PRELIMINARIES}
\label{sec3}
\subsection{Matroids and Polymatroids}
\label{sec3:a}
A \textbf{\textit{matroid}} $M$ is given by a tuple $(E,\mathcal{I})$, where $E$ is some finite set, and $\mathcal{I}$ is a collection of subsets of $E$ satisfying the following three conditions:

\begin{description}
\item [\textit{(M1)}:]$\emptyset \in \mathcal{I}$
\item [\textit{(M2):}]If $I \in \mathcal{I}$ and $I' \subseteq I$, then $I' \in \mathcal{I}$
\item [\textit{(M3):}] If $I_1$ and $I_2$ are in $\mathcal{I}$ and $|I_1| < |I_2|$, then there is an element $e$ of $I_2 - I_1$ such that $I_1 \cup e \in \mathcal{I}$
\end{description}

$E$ and $\mathcal{I}$ are referred to as the \textbf{\textit{ground set}} and the \textbf{\textit{independent set}} of $M$, respectively. As an example, consider an $m\times n$ matrix $A$. Define the ground set $E(A)$ to be the set of all column labels $\{i \in [n]\}$ indexing $A$. The independent set $\mathcal{I}(A)$ consists of all collections of labels that index linearly independent columns. For the matrix $A$ given below, the independent set would be $\mathcal{I}(A) = \big\{\emptyset,\{1\},\{2\},\{3\},\{1,2\},\{1,3\},\{2,3\}\big\}$.
\[
A\: =
\begin{array}{c}
\begin{matrix}
\mathbf{1} & \mathbf{2} & \mathbf{3}
\end{matrix} \\
\left[\ \begin{matrix}
1 & 0 & 1 \\
0 & 1 & 1 
\end{matrix}\ \right]
\end{array}
\]

Any matroid that can be used to describe the dependence structure of a matrix defined over some finite field $\mathbb{F}_q$ (as with $A$ above for $q=2$) is called a \textit{\textbf{vector matroid}}. The last axiom \textit{(M3)} is called the \textbf{\textit{exchange axiom}}, and is the least intuitive of the three. To get a better sense for it, we include a proof that it is satisfied for the independent set of a vector matroid.

\begin{proof}
Let $I_1$ and $I_2$ be two members of the independent set $\mathcal{I}$ for a vector matroid $M(A)$, with $|I_1|<|I_2|$. This means both $I_1$ and $I_2$ index collections of linearly independent columns. The subspace $\mathcal{W}$ spanned by $I_1 \cup I_2$ has dimension at least $|I_2|$ (which occurs if $I_1 \subset I_2$). Assume for the sake of contradiction that $I_1 \cup e$ is linearly dependent for all $e \in I_2 - I_1$. Subtraction here is set subtraction, making $I_2 - I_1$ the set of all elements contained in $I_2$, but not in $I_1$. By this assumption, $\mathcal{W}$ is contained in the span of $I_1$, implying $|I_2|\leq dim \mathcal{W} \leq |I_1| < |I_2|$; contradiction. 
\end{proof}
\hfill\\

For two matroids $M_1$ and $M_2$, denote the corresponding ground sets by $E(M_1)$, $E(M_2)$, and independent sets by $\mathcal{I}(M_1)$, $\mathcal{I}(M_2)$. These matroids are said to be \textbf{\textit{isomorphic}} $(M_1 \cong M_2)$ if there exists a bijection $\psi$ satisfying
\begin{equation*}
\psi : E(M_1) \rightarrow E(M_2),\quad\forall X\subseteq E(M_1)
\end{equation*}
\begin{equation*}
 \quad \psi(X) \in \mathcal{I}(M_2) \iff X \in \mathcal{I}(M_1).
\end{equation*}
If a matroid $M_1$ is isomorphic to a vector matroid $M_2(A)$, where $A$ is a matrix over the field $\mathbb{F}_q$, then $M_1$ is said to be $\mathbf{F_q}$\textit{\textbf{-representable}}. It's important to note that not every matroid is $\mathbb{F}_q$-representable for an arbitrary field. To illustrate this, let $\mathbb{F}_2$ be the Galois field of two elements, where element $x \in \{0,1\}$, and addition/multiplication are given by mod-2 arithmetic. Construct a matroid $M$ over $E = \{1,2,3,4 \}$ such that all subsets of cardinality two or less are contained in the independent set; namely, 
\begin{equation*}
\mathcal{I} = \big\{S\subseteq E : 0 \leq |S| \leq 2 \big\}.
\end{equation*}
Any matroid defined over a ground set of cardinality $n$, whose independent set contains all subsets of cardinality $k$ or less is called a \textit{\textbf{uniform matroid}} $U_{k,n}$. Here, $M$ corresponds to the uniform matroid $U_{2,4}$.\\
\begin{Proposition}
[Oxley] The uniform matroid $U_{2,4}$ is not $\mathbb{F}_2$-representable
\end{Proposition}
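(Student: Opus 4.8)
The plan is to suppose, for contradiction, that $U_{2,4}$ is $\mathbb{F}_2$-representable by some matrix $A$ with columns $v_1, v_2, v_3, v_4$ over $\mathbb{F}_2$, and then to derive a contradiction from a counting argument on the number of available directions in a two-dimensional space over $\mathbb{F}_2$.

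First I would translate the matroid axioms into statements about the columns. By definition of $U_{2,4}$, every pair $\{i,j\}$ lies in $\mathcal{I}$ while every triple is dependent. In terms of the representation this says: (i) any two columns are linearly independent, and (ii) any three columns are linearly dependent. From (i) applied to $\{1,2\}$, the columns $v_1,v_2$ span a two-dimensional subspace $W$. Applying (ii) to the triples $\{1,2,3\}$ and $\{1,2,4\}$, both $v_3$ and $v_4$ must be linear combinations of $v_1,v_2$, hence lie in $W$. Thus all four columns lie in the single two-dimensional space $W \cong \mathbb{F}_2^2$.

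Next I would count. Over $\mathbb{F}_2$ the only nonzero scalar is $1$, so two nonzero vectors are linearly dependent precisely when they are equal; equivalently, each one-dimensional subspace (line through the origin) of $W$ contains exactly one nonzero vector. Condition (i) forces the four columns to be pairwise independent, hence pairwise distinct and nonzero, so they must occupy four distinct lines of $W$. But $\mathbb{F}_2^2$ contains only $2^2 - 1 = 3$ nonzero vectors, and therefore only three lines. Four pairwise-independent columns cannot fit, which is the contradiction.

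The argument is short, and the only step requiring genuine care is the confinement of all columns to a common two-dimensional subspace $W$: this is where the full strength of $U_{2,4}$ (\emph{every} triple dependent, not merely some) is used, and it is what reduces the problem to the finite counting bound. The real obstacle, such as it is, is recognizing that the failure is intrinsic to the small field rather than to the matroid itself: the same four-column dependence pattern is realizable over any larger field $\mathbb{F}_q$ with $q \geq 3$, so the proof must pinpoint the scarcity of projective points over $\mathbb{F}_2$ rather than invoke any abstract matroid obstruction.
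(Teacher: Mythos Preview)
Your proof is correct and follows essentially the same counting argument as the paper: confine all four columns to a two-dimensional $\mathbb{F}_2$-space and observe that it contains only three nonzero vectors, too few to accommodate four pairwise-independent columns. The only cosmetic difference is that the paper invokes the matroid rank directly (the largest independent set has size two, so the column space has dimension two), whereas you derive the confinement explicitly from the dependence of the triples $\{1,2,3\}$ and $\{1,2,4\}$.
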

\begin{proof}
Assume that $U_{2,4}$ is $\mathbb{F}_2$-representable, implying that $U_{2,4} \cong M(A)$ for some matrix $A$ defined over $\mathbb{F}_2$. Since the largest element in $\mathcal{I}(U_{2,4})$ has cardinality two, the column space of $A$ must have dimension two. A two-dimensional vector space over $\mathbb{F}_2$ has exactly four members, but only three of them are non-zero.  This means that $A$ cannot have four distinct non-zero columns, so a set of two columns in $A$ must be linearly dependent. However, this contradicts the original claim, since every pair of columns must be linearly independent in order for $U_{2,4} \cong M(A)$.
\end{proof}
\hfill

A \textit{\textbf{polymatroid}} $\mathcal{P}$ is given by a tuple $(E,f)$ of ground set $E$ and a \textbf{\textit{submodular set function}} $f$. A submodular set function is a mapping from the power set of $E$ to the non-negative real numbers ($f : 2^{E} \rightarrow \mathbb{R}^+$) satisfying the following three conditions:

\begin{description}
\item [\textit{(non-negativity):}] \qquad\qquad\:\,$f(\emptyset) = 0$, $ f(S) \geq 0 \quad \forall S \subseteq E$
\item [\textit{(monotonicity):}]\qquad\qquad\,\,$f(S) \leq f(T) \iff S \subseteq T \subseteq E$
\item [\textit{(submodularity):}]\qquad\qquad\,$f(S \cup  T) + f(S \cap T) \leq f(S) + f(T)$ 
\end{description}
Fujishige \cite{polymatroid} was the first to show that Shannon entropy acts as a submodular set function $\mathcal{H}$ for the polymatroid $(E,\mathcal{H})$ over ground set $E$ of random variables $\{X_1,X_2,...,X_n\}$. Polymatroids are generalizations of matroids in that, if we restrict the codomain of the set function to the non-negative integers $(f: 2^E \rightarrow \mathbb{Z^+})$ and require the additional upper bound condition $f(S) \leq |S| \:$ $\forall S \subseteq E$, then $f$ is called a \textit{\textbf{rank function}}, and can be used to build a matroid $M$.\\
\begin{Lemma}
[Oxley]Let $E$ be a set and $f$ a function on $2^E$ satisfying the above rank function conditions. If $X$ and $Y$ are subsets of $E$ such that, for all $y \in (Y-X)$,\: $f(X\cup y) = f(X)$, then $f(X\cup Y) = f(X)$
\end{Lemma}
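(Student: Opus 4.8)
The plan is to argue by induction on the cardinality of $Y - X$, peeling off one element of $Y-X$ at a time and controlling the rank increment with submodularity. The base case is $Y - X = \emptyset$: then $Y \subseteq X$, so $X \cup Y = X$ and $f(X\cup Y) = f(X)$ trivially. For the inductive step I would first record the reduction $X \cup Y = X \cup (Y - X)$, so that only the elements of $Y - X$ matter, and write $Y - X = \{y_1,\ldots,y_m\}$ with $m \geq 1$.

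The key computation is an application of submodularity to a carefully chosen pair. I would set $S = X \cup \{y_1,\ldots,y_{m-1}\}$ and $T = X \cup \{y_m\}$. Since $y_m$ lies in neither $X$ nor $\{y_1,\ldots,y_{m-1}\}$, one has $S \cap T = X$ and $S \cup T = X \cup (Y-X) = X \cup Y$. Submodularity then yields
\[
f(X \cup Y) + f(X) \;\leq\; f(S) + f(T).
\]
The next step is to show the right-hand side equals $2f(X)$. The term $f(T) = f(X \cup y_m)$ equals $f(X)$ directly by hypothesis, because $y_m \in Y - X$. For $f(S)$ I would invoke the inductive hypothesis on the pair $(X,S)$: here $S - X = \{y_1,\ldots,y_{m-1}\}$ has cardinality $m-1$, and for each $y_i$ in this set the assumption $f(X \cup y_i) = f(X)$ holds (these $y_i$ all belong to $Y-X$), so the induction applies and gives $f(S) = f(X)$. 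Substituting gives $f(X \cup Y) + f(X) \leq 2f(X)$, hence $f(X\cup Y) \leq f(X)$.

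Finally, the reverse inequality is immediate from monotonicity: since $X \subseteq X \cup Y$, we get $f(X) \leq f(X \cup Y)$. Combining the two bounds forces $f(X\cup Y) = f(X)$, which closes the induction. I do not expect a genuine obstacle here, as the statement is essentially a routine consequence of two of the rank axioms (note that only monotonicity and submodularity are used—non-negativity and integrality play no role). The one point that must be handled with care is the verification of $S \cap T = X$ and $S \cup T = X \cup Y$, since the entire argument hinges on feeding exactly these two sets into the submodular inequality; splitting off an element already contained in $X$, for instance, would destroy the intersection identity and collapse the argument.
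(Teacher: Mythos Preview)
Your proposal is correct and follows precisely the approach the paper indicates: the paper defers to Oxley and states only that ``the argument follows by induction on the number of elements in $(Y-X)$,'' which is exactly the induction you carry out, using submodularity on the pair $S = X \cup \{y_1,\ldots,y_{m-1}\}$, $T = X \cup \{y_m\}$ and monotonicity for the reverse inequality. There is nothing to add.
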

\begin{proof}
See \cite{oxley}, the argument follows by induction on the number of elements in $(Y-X)$.
\end{proof}

\hfill

\begin{Theorem}
[Oxley] Let $E$ be a set and $f$ a function on $2^E$ satisfying the above rank function conditions. Let $\mathcal{I}$ be the collection of subsets $S$ of $E$ for which $f(S) = |S|$. Then $(E,\mathcal{I})$ is a matroid having rank function $f$.
\end{Theorem}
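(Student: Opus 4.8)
The plan is to verify the three matroid axioms \textit{(M1)}--\textit{(M3)} directly from the rank-function conditions, and then separately confirm that the rank function of the resulting matroid coincides with $f$. The single fact that drives almost everything is a \emph{unit-increment} property: for any $S \subseteq E$ and any element $e$, submodularity applied to $S$ and $\{e\}$ gives $f(S \cup e) + f(S \cap \{e\}) \leq f(S) + f(\{e\})$, and since $f(\{e\}) \leq 1$ and $f(S \cap \{e\}) \geq 0$, this yields $f(S \cup e) \leq f(S) + 1$. I would establish this at the outset, since it converts the integrality bound $f(S) \leq |S|$ into a statement about how $f$ can grow one element at a time.

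Axiom \textit{(M1)} is immediate, as $f(\emptyset) = 0 = |\emptyset|$ forces $\emptyset \in \mathcal{I}$. For the hereditary axiom \textit{(M2)}, given $I \in \mathcal{I}$ and $I' \subseteq I$, I would add the elements of $I \setminus I'$ to $I'$ one at a time; the unit-increment property gives $f(I) \leq f(I') + |I \setminus I'|$, and since $f(I) = |I| = |I'| + |I \setminus I'|$ this forces $f(I') \geq |I'|$. Combined with the upper bound $f(I') \leq |I'|$ we obtain $f(I') = |I'|$, so $I' \in \mathcal{I}$.

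The exchange axiom \textit{(M3)} is the main obstacle, and this is where Lemma 1 earns its place. Suppose $I_1, I_2 \in \mathcal{I}$ with $|I_1| < |I_2|$, and assume for contradiction that no $e \in I_2 \setminus I_1$ yields $I_1 \cup e \in \mathcal{I}$. Monotonicity gives $f(I_1 \cup e) \geq f(I_1) = |I_1|$, while unit-increment gives $f(I_1 \cup e) \leq |I_1| + 1$; the failure of independence then pins $f(I_1 \cup e) = |I_1| = f(I_1)$ for every such $e$. This is exactly the hypothesis of Lemma 1 with $X = I_1$ and $Y = I_2$, so $f(I_1 \cup I_2) = f(I_1) = |I_1|$. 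But $I_2 \subseteq I_1 \cup I_2$, so monotonicity forces $f(I_1 \cup I_2) \geq f(I_2) = |I_2| > |I_1|$, a contradiction. The subtlety I would emphasize is that the independence-failure hypothesis is only a statement about single-element extensions, and it is precisely Lemma 1 that upgrades this local information into the global equality $f(I_1 \cup I_2) = f(I_1)$.

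Finally, to confirm that $f$ is the rank function, let $r(S) = \max\{|I| : I \subseteq S,\, I \in \mathcal{I}\}$. Any independent $I \subseteq S$ satisfies $|I| = f(I) \leq f(S)$ by monotonicity, so $r(S) \leq f(S)$. For the reverse, I would take a maximal independent $I \subseteq S$; maximality together with the sandwich $f(I) \leq f(I \cup e) \leq f(I) + 1$ forces $f(I \cup e) = f(I)$ for every $e \in S \setminus I$, so Lemma 1 (with $X = I$, $Y = S$) gives $f(S) = f(I) = |I| \leq r(S)$. Hence $r(S) = f(S)$ for all $S \subseteq E$, completing the argument.
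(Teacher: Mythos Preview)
Your argument is correct and tracks the paper's proof closely for axioms \textit{(M1)} and \textit{(M3)}: both invoke Lemma~1 in exactly the same way for the exchange axiom. The one substantive difference is in \textit{(M2)}: the paper applies submodularity once to the pair $(I', I \setminus I')$ to obtain $f(I) + f(\emptyset) \leq f(I') + f(I \setminus I')$ and then squeezes with the upper bounds, whereas you build up via repeated unit increments. Both are valid; the paper's one-shot argument is slightly slicker, while your unit-increment lemma earns its keep by being reused in \textit{(M3)} and in the rank-function verification.

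Notably, you go further than the paper: the theorem asserts that $f$ is the rank function of the resulting matroid, but the paper's proof stops after verifying \textit{(M1)}--\textit{(M3)} and never checks $r(S) = f(S)$. Your final paragraph fills this gap, again via Lemma~1, and so your write-up is actually more complete than the original.
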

\begin{proof}
We will show that the independent set $\mathcal{I}$, constructed in the manner above satisfies the matroid axioms \textit{(M1)-(M3)}.
\begin{itemize}
\item By the non-negativity and upper bound conditions on $f$,
$0\leq f(\emptyset)\leq |\emptyset| = 0$, so $f(\emptyset) = |\emptyset|$ and $\emptyset \in \mathcal{I}$, satisfying \textit{(M1)}.
\item If $I \in \mathcal{I}$ then $f(I) = |I|$. For $I' \subseteq I$, and by submodularity of $f$, $f(I) + f(\emptyset) \leq f(I') + f(I-I')$. Since each term on the right hand side is upper-bounded, we can simplify to $|I| \leq |I'| + |I - I'| = |I|$. Equality must hold throughout, implying $f(I') = |I'|$ and $I' \in \mathcal{I}$, satisfying \textit{(M2)}.
\item Assume for the sake of contradiction $I_1$ and $I_2$ are in $\mathcal{I}$ with $|I_1|<|I_2|$,  but for all $e \in I_2 - I_1$, $f(I_1 \cup e) \neq |I_1 \cup e|$. We have $|I_1| + 1 > f(I_1 \cup e) \geq f(I_1) = |I_1|$, meaning $f(I_1 \cup e) = |I_1|$. Now, by applying Lemma 1 with $X = I_1$ and $Y = I_2$, it follows that $f(I_1) = f(I_1 \cup I_2)$. But $f(I_1 \cup I_2) \geq f(I_2) = |I_2|$, so $|I_1| \geq |I_2|$; contradiction, meaning \textit{(M3)} is satisfied. 
\end{itemize}
 
\end{proof}

With these basic definitions and theorems established, we move on to our discussion of coding theory.

\subsection{Algebraic Coding Theory and MDS Codes}
\label{sec3:b}

The fundamental object of coding theory is the message, which in the discrete case considered here is simply a string of length $k$ represented over some alphabet $\Sigma$ of cardinality $q$. Different subfields of coding theory are often interested in transformations of messages into codewords (strings of length $n \geq k$), with the utility of these codewords ranging from error-correction to cryptographic secret sharing \cite{SSSurvey}. For each mapping from message to codeword, the message length $k$, codeword length $n$, alphabet size $q$, and minimum distance $d$, are important parameters. The minimum distance is given by $d := \min\limits_{c_i,c_j \in \mathcal{C}} \Delta(c_i,c_j)$, where $\Delta(\cdot)$ is the Hamming Distance, and $c_i$ $c_j$ are any two codewords in $\mathcal{C} \subseteq q^n$. This minimum distance is used as a measure of error-correction and detection capabilities of a code. In particular, if the minimum distance between any two codewords is $k+1$, the coding scheme can be used to detect $k$ errors, while a minimum distance of $2k + 1$ allows for the correction of $k$ errors \cite{MW-S}.\\

For any coding scheme with minimum distance $d$, we can upper bound the number of codewords $\mathcal{M}$ in the following way. For the codewords $c_1,...,c_{\mathcal{M}}$ of an $(n,k,d)_q$ code $\mathcal{C}$, let $\bar{c}_i$ be the prefix of $c_i \in \mathcal{C}$ of length $n-d+1$. Each $\bar{c}_i$ must be distinct $(\bar{c}_i \neq \bar{c}_j)$, otherwise $\Delta(c_i,c_j) \leq d-1$, violating the claim that $\mathcal{C}$ has minimum distance $d$. Thus, the number of these prefixes bounds $\mathcal{M} \leq q^{n-d+1}$, which is known as the \textit{Singleton bound}. For linear codes, which are any $\mathcal{C} \subseteq q^n$ such that each codeword $c_i \in \mathcal{C}$ is generated by a linear combination of codewords $c_j \in \mathcal{C}$ ($c_i \neq c_j$), $\mathcal{M}$ is equal to $q^k$. Plugging this into the Singleton bound, we get $d \leq n - k + 1$. Codes that meet this bound with equality are termed maximum distance separable (MDS), and are an important class of codes with optimal error-correction/detection capabilities (for large alphabets). If we treat each $c_i$ as a $1 \times n$ row vector, MDS codes are represented by a basis $c_1,...,c_k$ of $k$ vectors, whose span produces all $q^k$ codewords. This $k\times n$ collection of basis vectors is called the \textit{\textbf{generator matrix}} $\mathcal{A}_{k,n}$ of $\mathcal{C}$. An interesting property of $\mathcal{A}_{k,n}$ follows from the constraints imposed by the minimum distance $d = n-k +1$; each set of $k$ (and by extension, $\leq k$) columns are linearly independent. This fact falls directly from the prefix argument introduced above.\\

Recalling examples from the matroid preliminaries, linearly independent columns can be used to define the independent set of a vector matroid. In this case the matroid produced by $\mathcal{A}_{k,n}$ is isomorphic to the uniform matroid $U_{k,n}$. This will be the crucial fact allowing us to connect codes and probability distributions, but we must first discuss the alphabet size $q$ (or equivalently, the field $\mathbb{F}_q$ of $\mathcal{A}_{k,n}$) necessary for a code to be MDS. The MDS conjecture postulates that MDS codes only exist when $q \geq n-1$, except when $k=3$  or $k = q -1$ and $q=2^m$  $m \in \mathbb{Z}^{>0}$, in which case $q \geq n-2$ \cite{projgeom}\cite{MDS_conj}. While such conditions seem arbitrary at first glance, they stem from a meaningful equivalence with an object in projective geometry called a \textit{k-arc}. Most progress on the MDS conjecture, such as the recent positive result for prime fields \cite{ball_mds}, proceeds by arguments and new results on these geometric objects.\\

As previously mentioned, when $q$ is large, strategies for constructing certain MDS codes are known, with the most well established corresponding to Reed-Solomon (RS) codes \cite{RScodes}. In explaining what Reed-Solomon codes are, and how they are generated, we will only consider the classical case when $q = n$. Note that this limits the possible values of $n$ significantly, since finite fields $\mathbb{F}_q$ only exist when $q= p^m$ is a prime power. In these cases, the elements of $\mathbb{F}_{p^m}$ are the $p^m$ roots of the polynomial $x^{(p^m)} - x$ \cite{lidl}. In the simplest case when $q$ is a prime number ($m=1$), the elements of $\mathbb{F}_p$ are just $\mathbb{Z}_p$, the integers mod $p$ . Given values for $k$, and $q=n$, start by defining a polynomial
\begin{equation*}
f(z) = \sum\limits_{j =0}^{k-1}f_j z^j \quad \forall_j f_j \in \mathbb{F}_q
\end{equation*}
of degree less than $k$ for some indeterminate $z$. Since the coefficients range over all $k$-tuples in $(\mathbb{F}_q)^k$, $f(z)$ is one of $q^k$ polynomials of this form. Next, define a $q$-tuple $\mathcal{B} = (\beta_1,\beta_2,...,\beta_q)$ such that $\beta_1=0, \beta_2 = 1, \beta_3 = \alpha, ..., \beta_q = \alpha^{q-2}$, for a primitive element $\alpha$ over $\mathbb{F}_q$. A primitive element is any element of the field that forms a multiplicative cyclic group, meaning raising $\alpha$ to some power generates all other elements (except 0). This implies $\mathcal{B}$ is just an ordered list of distinct elements in $\mathbb{F}_q$ (ex: a permutation of $[q]-1$ for $q$ a prime number). The idea behind introducing $\mathcal{B}$ is that we wish to construct another polynomial of degree less than $n$ by evaluating the polynomials $f(z)$ at $n$ points. This procedure is referred to as a \textit{\textbf{valuation map}} from $(\mathbb{F}_q)^k$ to $(\mathbb{F}_q)^n$. By evaluating $f(z)$ at each element in $\mathcal{B}$, $\big(f(\beta_1),...,f(\beta_q)\big)$, with each $f(\beta_i) = \sum\limits_{j=0}^{k-1} f_j \beta_i^j$, we can produce the $k$ basis vectors for our generator matrix $\mathcal{A}_{k,n}$. These vectors form a Vandermonde matrix, where each row is $\mathcal{B}$ raised to some power.\\
\[
\mathcal{A}_{k,n}\: =
\begin{bmatrix}
\:1 & 1 & 1 & \dots & 1 \\
\:0 & 1 & \alpha & \dots & \alpha^{-1}\\
\:0 & 1 & \alpha^2 & \dots & \alpha^{-2}\\
\:\vdots & \vdots & \vdots & \vdots & \vdots\\
\:0 & 1 & \alpha^{k-1} & \dots & \alpha^{-(k-1)}
\end{bmatrix}\
\]

Here the negative exponents are the same as a reverse indexing of $\mathcal{B} - \{0\}$ (ex: $\alpha^{-1} = \alpha^{q-2}$ ). To build intuition as to why/how this procedure works, let's look at the Reed-Solomon code underlying the distribution in Table \ref{tbl3}. In this case, $n=4$ with field $\mathbb{F}_4 = \mathbb{F}_{2^2}$. The elements of $\mathbb{F}_4$ (the roots of $x^4 - x$) are $\{0,1,z,z+1\}$ for indeterminate $z$. Addition and multiplication are given by the following tables, where $z^2 = z+1$.

\begin{table}[!htb]
	\begin{subtable}{.5\linewidth}
    	\centering
		\captionof{table}{Addition over $\mathbb{F}_4$}
		\begin{tabular}{c|cccc}
			\boldmath$\oplus$ &\boldmath$0$ &\boldmath$1$ &\boldmath$z$ &\boldmath$z^2$\\
			\hline
            \boldmath$0$ &$0$ & $1$  &$z$ & $z^2$ \\
            \boldmath$1$ & $1$ & $0$  &$z^2$ & $z$ \\
            \boldmath$z$ &$z$ & $z^2$  &$0$ & $1$ \\
            \boldmath$z^2$ &$z^2$ &$z$ & $1$  &$0$ \\
		\end{tabular}\label{tab:3}
        \label{tbl4:sub1}
	\end{subtable}%
	\begin{subtable}{.5\linewidth}
    	\centering
        \caption{Multiplication over $\mathbb{F}_4$}
		\begin{tabular}{c|cccc}
        	\boldmath$\otimes$ &\boldmath$0$ &\boldmath$1$ &\boldmath$z$ &\boldmath$z^2$\\
            \hline
            \boldmath$0$ &$0$ & $0$ & $0$  &$0$ \\
            \boldmath$1$ &$0$ & $1$ & $z$  &$z^2$ \\
            \boldmath$z$ &$0$ & $z$ & $z^2$  &$1$ \\
            \boldmath$z^2$ &$0$ & $z^2$ & $1$  &$z$ \\
		\end{tabular}\label{tab:4}
        \label{tbl4:sub2}
	\end{subtable}
\end{table}

Since $4$ is a prime power, multiplication over the field obeys mod-$g(z)$ arithmetic, for a prime polynomial $g(z)$ of degree $m=2$ (in particular, $g(z) = z^2 + z + 1$). We have chosen not to include a detailed background on non-prime finite fields (which can be found here \cite{lidl}), but only knowledge of the addition/multiplication tables is required to move forward. For $k=2$, the polynomials $f(z)$ of the valuation map take the form $f_0 + f_1 z, \: \{f_0,f_1\} \in \mathbb{F}_4$, and $\mathcal{B} = \{0,1,z,z+1\}$. Evaluating $f(z)$ at each $\beta \in \mathcal{B}$ results in the vector $\big(f_0, f_0 + f_1, f_0 + f_1z, f_0 + f_1 (z+1) \big)$. Choosing the standard basis, i.e $f(z) = 1$ and $f(z) = z$, the vector simplifies to $(1,1,1,1)$ and $(0,1,z,z+1)$ respectively. The span of these produces the following $16$ row vectors.

\begin{table}[!htb]
    	\centering
        \caption{Reed-Solomon codes over $\mathbb{F}_4$}
		\begin{tabular}{|cccc|c|cccc|}
        	
            $0$ & $0$ & $0$ & $0$ &  &$z$ & $z$ & $z$ & $z$ \\
            $0$& $1$ & $z$ & $z+1$ &  &$z$ & $z+1$ & $0$ & $1$ \\
            $0$ & $z$ & $z+1$ & $1$ &  &$z$ & $0$ & $1$ & $z+1$\\
            $0$ & $z+1$ & $1$ & $z$ &  &$z$ & $1$ & $z+1$ & $0$ \\
            $1$ & $1$ & $1$ & $1$ &  &$z+1$ & $z+1$ & $z+1$ & $z+1$ \\
            $1$ & $0$ & $z+1$ & $z$ &  &$z+1$ & $z$ & $1$ & $0$ \\
            $1$ & $z+1$ & $z$ & $0$ &  &$z+1$ & $1$ & $0$ & $z$ \\
            $1$ & $z$ & $0$ & $z+1$ &  &$z+1$ & $0$ & $z$ & $1$ \\
		\end{tabular}
        \label{tbl5}
\end{table}

Since all fields with $p^m$ elements are isomorphic to $\mathbb{F}_{p^m}$ up to relabeling of the elements, we can map $z$ to $2$ and $z+1$ to $3$. Doing so produces the rows of Table \ref{tbl3}.
\section{MAXIMIZING DISTRIBUTIONS}
\label{sec4}
To recapitulate the last two sections, we introduced matroids and demonstrated their connection to MDS codes (through the generator matrix $\mathcal{A}_{k,n}$) and to entropy (through polymatroids). What's left to show is that the maximizers of \textit{Cohesion-k} have a matroidal structure isomorphic to that of MDS codes, and that $\mathbb{F}_q$-representability of an MDS code implies achievability of the polymatroid bound for distributions with support $q$. To prove the first claim, we introduce a constant upper bound on each Cohesion measure that meets the polymatroid bound at a single point. Given the standard expression for \textit{Cohesion-k}, namely
\begin{equation*}
\mathcal{C}^{(k)}(X) = \sum\limits_{X_A \in \mathcal{E}_k} H(X_A) - {n-1 \choose k-1}H(X),
\end{equation*}
we can upper bound each term in the sum, and lower bound the joint entropy. For any $S \subseteq X$, the entropy $H(S)$ is maximal when $S$ is uniform over its support. Since each $X_i$ has support $q$, $S = \{X_1,...,X_m\},\: ||S|| \leq q^m$, it follows that $H(S) \leq \log_{q}(q^m) = m$, the cardinality of $S$. Thus, each term in the sum of \textit{Cohesion-k} is at most $k$. The joint entropy can be lower bounded by noticing $X_A \subseteq X$ and $H(X_A) \leq H(X)$ by monotonicity, implying
\begin{equation*}
H(X) \geq \max_{X_A \in \mathcal{E}_k}H(X_A) = k.
\end{equation*}
Substituting these results into the equation, we get the following constant bound
\begin{equation*}
\mathcal{C}^{(k)}(X) \leq k {n \choose k} - k {n-1 \choose k-1} = k {n-1 \choose k}.
\end{equation*}

\begin{Lemma}
The entropy function at the constant bound maps the power set of random variables in $X$ to the non-negative integers $(\mathcal{H}: 2^X \rightarrow \mathbb{Z}^+)$
\end{Lemma}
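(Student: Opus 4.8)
The plan is to show that attaining the constant bound pins the entropy of every subset to the value $\min(|S|,k)$, which is manifestly a non-negative integer. The starting point is that the bound was assembled from two separate relaxations: the termwise upper bound $H(X_A) \leq k$ for each of the ${n \choose k}$ subsets $X_A \in \mathcal{E}_k$, and the lower bound $H(X) \geq k$ on the joint entropy. Writing the slack as
\begin{equation*}
k{n-1 \choose k} - \mathcal{C}^{(k)}(X) = \sum_{X_A \in \mathcal{E}_k}\big(k - H(X_A)\big) + {n-1 \choose k-1}\big(H(X) - k\big),
\end{equation*}
each summand is non-negative, so equality in the bound forces them all to vanish at once: $H(X_A) = k$ for every $k$-subset, and $H(X) = k$. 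This decoupling of the two relaxations into a single sum of non-negative terms is the cleanest way to extract the otherwise coupled equality conditions.

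From these two facts I would compute $H(X_S)$ by cases on $|S|$. For $|S| \geq k$, the set $S$ contains some $k$-subset $A$, and monotonicity sandwiches $k = H(X_A) \leq H(X_S) \leq H(X) = k$, forcing $H(X_S) = k$. For $|S| = m < k$, I would extend $S$ to a $k$-subset $A \supseteq S$ (possible since $k \leq n$) and apply the chain rule $H(X_A) = H(X_S) + H(X_{A \setminus S} \mid X_S)$. The support constraint bounds $H(X_S) \leq m$, and since conditioning reduces entropy, $H(X_{A \setminus S} \mid X_S) \leq H(X_{A \setminus S}) \leq k - m$, giving $k = H(X_A) \leq m + (k-m) = k$; equality throughout forces $H(X_S) = m$.

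Combining the two cases yields $\mathcal{H}(S) = \min(|S|, k)$ for every $S \subseteq X$, so $\mathcal{H} : 2^X \rightarrow \mathbb{Z}^+$ as claimed; I would remark that this is exactly the rank function of the uniform matroid $U_{k,n}$, which sets up the isomorphism with MDS generator matrices in the subsequent argument and lets Theorem 1 apply. I expect the small-subset case to be the main obstacle: whereas subsets of size $\geq k$ are pinned directly by monotonicity against $H(X) = k$, the equality conditions of the bound say nothing immediate about $H(X_S)$ for $|S| < k$, and recovering their integrality requires the chain-rule tightness argument above rather than a direct appeal to the bound.
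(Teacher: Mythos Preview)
Your proof is correct and follows the same overall case split on $|S|$ as the paper. The only real difference is in the small-subset case: the paper argues that $H(X_A)=k$ for a $k$-subset over alphabet $q$ forces $X_A$ to be uniform on $q^k$, hence a product of $k$ independent uniform marginals, so any $j$-subset inherits entropy $j$; you instead extract $H(X_S)=|S|$ purely from the chain rule and the tightness of $H(X_S)\leq |S|$ and $H(X_{A\setminus S}\mid X_S)\leq k-|S|$. Your route is slightly more self-contained (it never invokes the characterization of maximum-entropy distributions), and your explicit slack decomposition makes the extraction of the equality conditions cleaner than the paper's implicit appeal to how the bound was built. Both arrive at $\mathcal{H}(S)=\min(|S|,k)$, which is all that is needed for the subsequent identification with $U_{k,n}$.
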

\begin{proof}
Split up the elements of the power set into different cases according to the value of k. Consider some subset $S \subseteq X$ such that the cardinality of $S$ is
\begin{itemize}
\item $\mathbf{0} :  H(\emptyset) = 0,$ which is integer valued. 
\item $\mathbf{j, \: 1 \leq j \leq k } :$ Since the upper bound requires all subsets $X_A$ of cardinality $k$ to have $H(X_A) = k$, every collection of $k$ random variables is independent and uniform over $q^k$, implying that every collection of $j$ variables between $1$ and $k$ is uniform over $q^j$. Thus, $H(S) = j$, which is integer valued.
\item $\mathbf{l, \: k \leq l \leq n} :$ Both the joint entropy $H(X)$ of cardinality $n$ and all subset entropies $H(X_A)$ of cardinality $k$ are equal to $k$. By monotonicity, every element of the power set with cardinality $l$ between $k$ and $n$ must also have $H(S) = k$, which is integer valued.
\end{itemize}
\end{proof}

By Lemma 2, the upper bound on subset entropies, and the fact that entropy is a submodular set function, entropy satisfies the requirements of a rank function for any distribution meeting the constant bound with equality. By Theorem 1, we can construct the independent set of a matroid $\mathcal{C}_{k,n}$ by looking at all subsets $S$ of $X$ where $H(S) = |S|$. When the bound is met, the independent set consists of all subsets of $X$ with cardinality $\leq k$, making $\mathcal{C}_{k,n} \cong U_{k,n}$ (and consequently $\mathcal{C}_{k,n} \cong M(\mathcal{A}_{k,n})$ for the vector matroid of an MDS generator matrix). This takes care of the first claim, but says nothing about the form of the distributions meeting these criteria. To prove the second claim, we must explore the connection between independence in vector and probability spaces.\\

\begin{Theorem}[Mat\'{u}\v{s}]
If a matroid $M$ is $\mathbb{F}_q$-representable, then $M$ also describes the statistical independence relationships for a distribution with marginal support $q$ \cite{matus}.
\end{Theorem}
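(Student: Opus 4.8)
The plan is to prove the statement constructively, exhibiting an explicit distribution whose entropy set function coincides with the rank function of $M$; this is precisely the construction underlying the Reed--Solomon example of Table \ref{tbl5}. Since $M$ is $\mathbb{F}_q$-representable, fix a matrix $A$ over $\mathbb{F}_q$ with $M \cong M(A)$, and without loss of generality take $A$ to have full row rank $r = \operatorname{rank}(M)$, so that $A$ is $r \times n$ with columns $a_1,\ldots,a_n \in \mathbb{F}_q^{\,r}$ and ground set $E = [n]$.

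First I would define the random variables. Let $U$ be a row vector drawn uniformly from $\mathbb{F}_q^{\,r}$, and set $X_i = \langle U, a_i\rangle \in \mathbb{F}_q$ for each $i \in [n]$. Each $X_i$ then has support contained in $\mathbb{F}_q$; whenever the column $a_i$ is nonzero (which holds for every non-loop element, and in particular for all elements when $M \cong U_{k,n}$), the linear functional $U \mapsto \langle U, a_i\rangle$ is surjective onto $\mathbb{F}_q$ with fibers of equal size, so $X_i$ is uniform and its marginal support is exactly $q$, as required.

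The central step is to compute the joint entropy of an arbitrary marginal. For $S \subseteq [n]$, the joint variable $(X_i)_{i\in S}$ equals $U A_S$, where $A_S$ is the $r \times |S|$ submatrix of columns indexed by $S$. As $U$ ranges uniformly over $\mathbb{F}_q^{\,r}$, the image $U A_S$ is distributed over the row space of $A_S$, a subspace of dimension $\operatorname{rank}(A_S) = r_M(S)$, where $r_M(S)$ denotes the matroid rank of $S$ (the size of a largest independent subset). The distribution is moreover \emph{uniform} over this subspace, since the linear map $U \mapsto U A_S$ has all fibers of the common size $q^{\,r - r_M(S)}$. Taking logarithms base $q$ yields $H(X_S) = r_M(S)$ for every $S$, so the entropy set function $\mathcal{H}$ of the constructed distribution agrees with the rank function of $M$ on the whole power set.

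It then remains to translate this equality of rank functions into the claimed equality of independence structures. Since $r_M$ satisfies the rank-function conditions of Section \ref{sec3:a}, so does $\mathcal{H}$, and by Theorem 1 the collection $\{S \subseteq E : H(X_S) = |S|\}$ is the independent set of a matroid determined by the distribution. Because $H(X_S) = r_M(S)$, we have $H(X_S) = |S|$ exactly when $S$ is independent in $M$, so this distribution-induced matroid is $M$ itself and the statistical independence relations of $X$ reproduce $M$. I expect the main obstacle to be the uniformity argument in the joint-entropy step---verifying that the pushforward of the uniform law under $U \mapsto U A_S$ is again uniform (equal fiber sizes), rather than merely supported on the correct subspace---together with the bookkeeping needed to handle degenerate (loop) columns, whose variables become constant with support below $q$; these cases are irrelevant for the loopless uniform matroids arising from MDS codes but must be acknowledged for the general statement.
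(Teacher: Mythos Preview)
Your construction is correct and is essentially the same as the paper's: the paper places the uniform distribution directly on the row space of $A$ (so that the marginal on columns $I$ is uniform on the row space of $A_I$, of size $q^{f(I)}$), while you realize the same distribution as $X = UA$ for $U$ uniform on $\mathbb{F}_q^{\,r}$. Your explicit fiber-counting argument for uniformity and your invocation of Theorem~1 at the end fill in details that the paper leaves implicit, but the underlying idea is identical.
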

\begin{proof}
For a vector matroid $M(A)$ with rank function $f$, recall that $0 \leq f(I) \leq |I|, \quad I \subseteq E$, where $E$ is a collection of column labels for $A$. The dual space (collection of rows) for any $I$ columns must also have rank $f(I)$. The linear space spanned by these rows over $\mathbb{F}_q$ has $q^{f(I)}$ points. If we take each of these coordinate vectors and assign them a probability of $q^{-f(I)}$ uniformly, the entropy of this collection will be $H(X_I) = f(I) {\cdot} \log_q q = f(I), \: \forall I \subseteq [n]$. Since under these conditions the entropy is a rank function, we have a one-to-one correspondence between the rank of a vector space and a probability space. Since the ground sets are also the same, this implies $M(A)$ is isomorphic to the matroid describing the resulting distribution.
\end{proof}
\hfill\\
By Theorem 2, the maximizing distributions for \textit{Cohesion-k}, whose matroid structure is the same as $M(\mathcal{A}_{k,n})$, can be generated by $\mathcal{A}_{k,n}$ (as evidenced by Table \ref{tbl3} and the Reed-Solomon code from the Section \ref{sec3:b} example). This result is almost enough to conclude the form of Cohesion maximizers in full generality, but one final piece is needed. Construction via Reed-Solomon code generalizes to any $n$ and $k$, as long as $n = p^m$, but what about $n \neq p^m$? \\

\begin{Theorem}
There exists an integer $q$ such that a distribution with marginal support $q$ achieves the upper bound on \textit{Cohesion-k} for $n$ variables.
\end{Theorem}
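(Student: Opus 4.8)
The plan is to reduce the claim to a pure representability question and then settle it with a Reed-Solomon construction over a sufficiently large field. By the matroid characterization established above, a distribution with marginal support $q$ meets the constant bound on \emph{Cohesion-k} with equality precisely when its entropy rank function is that of the uniform matroid $U_{k,n}$. Combining this with the theorem of Mat\'{u}\v{s} (Theorem 2), it suffices to exhibit a single integer $q$ for which $U_{k,n}$ is $\mathbb{F}_q$-representable: the representation is then converted into a distribution with marginal support $q$ achieving the bound. Thus the entire burden is to show that $U_{k,n}$ is representable over some field, equivalently that an $[n,k]_q$ MDS code exists for some prime power $q$.

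The obstruction flagged in the text is that the explicit recipe of Section \ref{sec3:b} fixes $q = n$, which is only meaningful when $n$ is a prime power. The key observation that removes this restriction is that the theorem asks only for the existence of \emph{some} $q$, not the smallest admissible one, so we may decouple the alphabet size from $n$. First I would choose any prime power $q \geq n$; such a $q$ exists because the primes are unbounded (Bertrand's postulate even supplies one between $n$ and $2n$). Over $\mathbb{F}_q$ there are at least $n$ distinct field elements, so I can fix distinct evaluation points $\beta_1, \ldots, \beta_n \in \mathbb{F}_q$ and form the Reed-Solomon code obtained by evaluating every polynomial of degree less than $k$ at these points---exactly the construction of Section \ref{sec3:b}, but using only $n$ of the available nodes rather than all $q$ of them.

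The remaining step is to verify that this length-$n$ code is still MDS. Its generator matrix $\mathcal{A}_{k,n}$ is the $k \times n$ Vandermonde matrix whose $i$-th column is $(1, \beta_i, \beta_i^2, \ldots, \beta_i^{k-1})^{\top}$. Any $k$ of these columns form a square Vandermonde matrix in distinct nodes, whose determinant $\prod_{i<j}(\beta_j - \beta_i)$ is nonzero; hence every set of $k$ columns is linearly independent, while any $k+1$ columns are necessarily dependent since they lie in a $k$-dimensional space, so $M(\mathcal{A}_{k,n}) \cong U_{k,n}$. This establishes $\mathbb{F}_q$-representability of $U_{k,n}$, and Theorem 2 then delivers the required distribution.

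I expect the main obstacle to be conceptual rather than computational: recognizing that the troublesome case $n \neq p^m$ dissolves once $q$ is allowed to exceed $n$, and in particular that the MDS conjecture---which concerns the \emph{minimal} feasible $q$---is irrelevant to a bare existence statement. The only technical point needing care is that restricting the Reed-Solomon evaluation to a proper subset of $\mathbb{F}_q$ (when $q > n$) does not spoil the MDS property, which the Vandermonde determinant argument settles directly.
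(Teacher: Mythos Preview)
Your argument is correct and reaches the same reduction as the paper---both first observe that, by Lemma 2, Theorem 1, and Theorem 2, the whole statement collapses to showing that $U_{k,n}$ is $\mathbb{F}_q$-representable for some $q$---but you finish by a genuinely different and more elementary route. The paper embeds uniform matroids into the class of transversal matroids (via the family $\mathcal{A} = (A_1,\dots,A_k)$ with each $A_i = [n]$) and then invokes a black-box result from Oxley (Corollary 12.2.17) guaranteeing that every transversal matroid is representable over all sufficiently large fields. You instead stay within the Reed--Solomon machinery already developed in Section~\ref{sec3:b}: choose any prime power $q \geq n$, take $n$ distinct evaluation points in $\mathbb{F}_q$, and verify the MDS property directly via the Vandermonde determinant. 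Your approach has the advantage of being fully self-contained and of making explicit that the obstruction $n \neq p^m$ is an artifact of insisting on $q = n$; the paper's approach buys a slightly stronger conclusion (representability over \emph{every} sufficiently large field, not just one prime power) at the cost of importing transversal-matroid theory.
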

\begin{proof}
Since the matroid structures are the same, it suffices to show that $U_{k,n}$ is $\mathbb{F}_q$-representable for some integer $q$. A \textit{\textbf{family}} of subsets of $X$ is a finite sequence $(A_1,A_2,...,A_m)$ such that each member $A_j, \: j \in \{1,2,...,m\}$, $A_j \subseteq X$ need not be distinct. A \textit{\textbf{transversal}} is a subset $\{e_1,e_2,...,e_m \}$ of $X$ such that each $e_j \in A_j, \: \forall j \in J$. If $S \subseteq X$, then $S$ is a \textit{\textbf{partial transversal}} if for some subset $K$ of $J$, $S$ is a transversal of $(A_j : j \in K)$. For a family $\mathcal{A}$ of subsets, let $\mathcal{I}$ be the set of all partial transversals of $\mathcal{A}$. Then $\mathcal{I}$ defines the independent set of a 
\textbf{\textit{transversal matroid}} on $X$. The class of uniform matroids is a subclass of transversal matroids (ex: family $\mathcal{A} = (A_1,A_2,...,A_k)$ such that each $A_i = [n]$ defines a transversal matroid that is isomorphic to $U_{k,n}$). By Corollary $12.2.17$ in \cite{oxley}, for all transversal matroids $M$, there exists an integer $n(M)$ such that $M$ is representable over every extension field of $\mathbb{F}$ having at least $n(M)$ elements.
\end{proof}
\hfill

This concludes the proof in full generality, and solidifies the connection between matroids, MDS codes, and maximizers of Cohesion. Having accomplished what we set out to do by establishing this connection, what's left is to show from a broader perspective where these findings may be applicable.

\section{RELATED WORK}
\label{sec5}

\begin{figure*}
  \includegraphics[width=\textwidth]{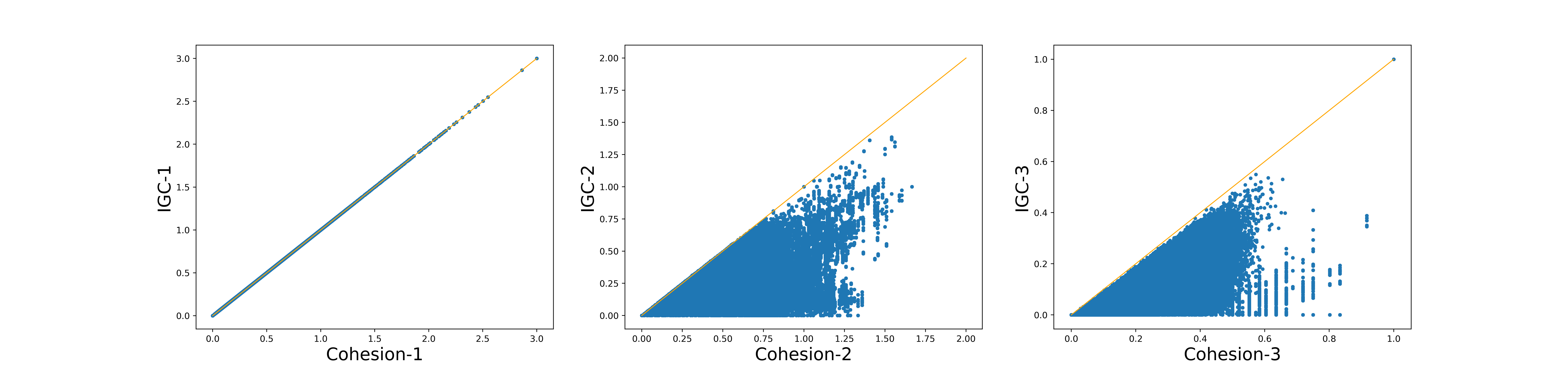}
  \caption{Plots of normalized \textit{Cohesion-k} vs $D(p||p^{(k)})$ (labeled IGC-k along the y-axis) for four binary variables. The orange line represents the upper bound of Equation \ref{eq4}. Each point appearing below this line represents lower order dependence that was accounted for by Cohesion, but not the divergence.}
  \label{fig2}
\end{figure*}

With the intention of quantifying emergence in complex systems, or how the whole deviates from the sum of its parts, Ay et al. \cite{nihat_geom} developed an approach based on the principles of information geometry. In this framework, the parts of a system are represented by the parameters of a hierarchical model $\mathcal{E}_k$. For a specified value of $k$, each parameter corresponds to a $k$th order marginal of $p$. The deviation from this hierarchical model is given by the KL-Divergence $D_{KL}(p||\mathcal{E}_{k})$. Due to the closure property of hierarchical models, this deviation is equivalent to $D_{KL}(p||p^{(k)})$, where $p^{(k)}$ is the maximum entropy projection for $\mathcal{E}_k$. This distribution is restricted to have the same $k$th order marginals as $p$. Each of these divergence measures $D_{KL}(p||p^{(k)})$ was shown to be upper-bounded by a normalized version of \textit{Cohesion-k}

\begin{equation}\label{eq4}
D_{KL}(p||p^{(k)}) \leq \frac{1}{{n-1 \choose k-1}}\mathcal{C}^{(k)}
\end{equation} 

This bound is loose in general, as evidenced by results on a tighter upper bound:

\begin{equation*}
D(p||p^{(k)}) \leq H_p(k) + (N-k)h_p(k) - H_p(N)
\end{equation*}

For a better understanding of the notation used above (what $H_p(\cdot)$ and $h_p(\cdot)$ represent), see section $3$ of \cite{nihat_geom}. Other than these bounds, no results on the global maximizers of $D(p||p^{(k)})$ were given for general $k$ until a recent paper by Matus \cite{matus_div}. An upper bound on the divergence to hierarchical log-linear models was introduced for any family of subsets $\mathcal{A}$ over $X$, with an analysis of its tightness and achievability. When $\mathcal{A}$ is the family of all cardinality $k$ subsets, this divergence is the same as $D(p||p^{(k)})$. This special case was studied as an example of achievability, and a subset of the global maximizers were arrived at by matroid-based arguments. These distributions, referred to as partition representable distributions by Matus \cite{prep}, are exactly the global maximizers of \textit{Cohesion-k}. These findings imply that the upper bound of Equation \ref{eq4} is tight for the extrema of both measures over a sufficiently large field, but what about more generally? By the nature of $D(p||p^{(k)})$ as a KL-Divergence, any distribution that can be represented solely by its $k$th order marginals receives zero weight, meaning no lower-order information $(\leq k)$ contributes to the measure. For example, the maximizers of $D(p||p^{(1)})$ (the maximally redundant distributions of Table \ref{tbl1:sub1}) would be in the set of global minimizers for $D(p||p^{(2)})$. This is not the case for \textit{Cohesion-k}, which only assigns zero weight when all variables are independent. As a result, Cohesion has preference for linear dependence structures, since $(k-1)$th order interactions receive almost as much weight as $k$th order. The gap in the bound between each measure can thus be accounted for by their different treatment of lower order information. To illustrate these differences further, we compare \textit{Cohesion-k} and $D(p||p^{(k)})$ empirically for four binary variables.\\

Directing our attention to the second plot in Figure \ref{fig2}, recall that the maximizing distributions for \textit{Cohesion-2} over four binary variables was given in Table \ref{tbl2}. By the result of Matus, we know that over a sufficiently large field, $D(p||p^{(2)})$ and \textit{Cohesion-2} have the same global maximizers, but what about when the support is restricted? When optimizing over a discretized simplex of bin size $1/12$, the maximizing distribution for $D(p||p^{(2)})$ over four binary variables is

\begin{table}[!htb]
    	\centering
        \caption{Binary (local) Maximizers for $D(p||p^{(2)})$}
		\begin{tabular}{cccc|c|}
        	\boldmath$X_0$ & \boldmath$X_1$ & \boldmath$X_2$ & \boldmath$X_3$&\boldmath$Pr$\\
			\hline
            0 & 0 & 0 & 0 & 1/4\\
            0 & 1 & 1 & 1 & 1/4\\
            1 & 0 & 1 & 1 & 1/6\\
            1 & 1 & 0 & 1 & 1/6\\
            1 & 1 & 1 & 0 & 1/6\\
		\end{tabular}
        \label{tbl4}
\end{table}

While this might not be the global maximizer, we can still gain insight from comparing it to maximizers of \textit{Cohesion-2}.  This distribution does not have a linear dependence structure, since no marginal variables have entropy of $1$ bit. It does, however, have more third order information (1.44 bits) than the normalized redundant-synergy maximizer of \textit{Cohesion-2} (1 bit). When the support is large, it is possible to have both linear dependence and maximal information of a particular order (Table \ref{tbl3}), which is the main reason why these measures converge at the extremes.

\section{DISCUSSION AND FUTURE WORK}
\label{sec6}
In addition to its appearance in this work, coding theory has been referenced in a few recent papers on MMI and information decomposition, hinting that a meaningful connection may exist between the two fields. Secret sharing, as alluded to in Section \ref{sec3:b}, is a sub-discipline of coding theory and cryptography concerned with the secure transmission of information. Each message is transformed into a codeword such that no information about the message or its encoding process can be recovered by an eavesdropper via repeated observation (and under certain computational assumptions). Ideas from secret sharing have been used to argue for an expanded definition of local positivity in the partial information decomposition \cite{SSS}, and to define new measures of MMI-based capacity bounds in secret key agreement \cite{SKA}. MDS codes, under the secret sharing interpretation, correspond to perfect/ ideal secret sharing schemes where the share is the same size as the secret and no collection of $(k-1)$ users can recover it (assuming the secret is size $k$). Although this is the first time MDS codes and perfect/ideal secret sharing schemes have been affiliated with Cohesion measures, it is not the first time they have appeared in the study of MMI.\\

Since many potential real world applications of these measures will have data with fixed support much smaller than $q=n$, it will be interesting to further study the maximizers of Cohesion over small/ restricted fields. From the limited experimental evidence we gathered in comparing to $D(p||p^{(k)})$ (and some additional simplex search), it seems that Cohesion maximizers are always related to linear dependence structures over $\mathbb{F}_q$, with the matroid structure as close to uniform as allowable over the fixed field. Such structures seem conceptually close to near-MDS (NMDS) and almost-MDS (AMDS) codes over small fields \cite{nmds}\cite{amds}.


\addtolength{\textheight}{-4cm}   



\section*{APPENDIX}

\subsection*{Proofs for additional polymatroid bounds}
Each proof has similar construction to the constant upper bound of Section \ref{sec4}. Each term in the sum is upper-bounded by its cardinality, and the joint entropy is lower bounded through monotonicity.

\begin{equation*}
\begin{split}
\mathcal{C}^{(1)} + \mathcal{C}^{(3)} &= \sum\limits_{i=1}^n H(X_i) +  \sum\limits_{i=1}^n H(X_{E/i}) - 4 {\cdot} H(X)\\
&\leq n(1) + n(n-1) - 4(n-1)\\
&\leq 4
\end{split}
\end{equation*}

\begin{equation*}
\begin{split}
\mathcal{C}^{(2)} + 3{\cdot}\mathcal{C}^{(1)} &= \sum\limits_{A=1}^{n \choose 2} H(X_A) +  \sum\limits_{i=1}^n 3{\cdot}H(X_i) - 6 {\cdot} H(X)\\
&\leq 2{n \choose 2 } + n(3) - 6(2)\\
&\leq 12
\end{split}
\end{equation*}

\begin{equation*}
\begin{split}
\mathcal{C}^{(2)} + 3{\cdot}\mathcal{C}^{(3)} &= \sum\limits_{A=1}^{n \choose 2} H(X_A) +  \sum\limits_{i=1}^n 3{\cdot}H(X_{E/i}) - 12 {\cdot} H(X)\\
&\leq 2{n \choose 2 } + n(n-1)(3) - 12(n-1)\\
&\leq 12
\end{split}
\end{equation*}

\bibliographystyle{IEEEtran}
\bibliography{test_bibdesk.bib}

\end{document}